\newcommand{\infig}[2]{  \includegraphics[width=#1\textwidth]{#2}  }
\newcommand{\mycomments}[1]{}
 \newcommand{\sfrac}[2]{ {\textstyle\frac{#1}{#2}}}
\let\LaTeXtitle\title
\renewcommand{\title}[1]{\LaTeXtitle{\large\textsf{\textbf{#1}}}}
\newtheorem{proposition}{Proposition}
\newcommand{\tbl}{\caption}
\title{Minimum reversion in multivariate time series}
\author[1]{T. KLEINOW}
\author[2]{M.H. VELLEKOOP}
\affil[1]{Department of Actuarial Mathematics and Statistics
       and the Maxwell Institute for Mathematical Sciences,
    School of Mathematical and Computer Sciences,      Heriot-Watt University, EH14 4AS, Edinburgh, U.K. Email: {\tt t.kleinow@hw.ac.uk}}
\affil[2]{
Amsterdam School of Economics,
Faculty of Economics \& Business, University of Amsterdam, 1018 WB Amsterdam, the Netherlands.\\ Email: {\tt m.h.vellekoop@uva.nl}}
\def\blfootnote{\xdef\@thefnmark{}\@footnotetext}
\begin{document}

  \maketitle

\blfootnote{\noindent
Torsten Kleinow acknowledges financial support from the Actuarial Research Centre of the Institute and Faculty of Actuaries, through the research programme on {\sl Modelling, Measurement and Management of Longevity and Morbidity Risk}. 
}

\begin{abstract}
\noindent 
We propose a new multivariate time series model in which we assume that each 
component has a tendency to revert to the minimum of all components. 
Such a specification is useful to describe phenomena where 
each member in a population which is subjected to random noise mimicks the behaviour of the best performing member .

We show that the proposed dynamics generate co-integrated processes. 
 We characterize the model's asymptotic properties for the case of two populations and show a stabilizing effect on long term dynamics in simulation studies. An empirical study involving human survival data in different countries provides an example which confirms the occurrence of the phenomenon of 
reversion to the minimum in real data. 
\medskip

\noindent {\bf Keywords}:  multivariate time series, co-integrated time series, human mortality modelling, rank-dependent drift
\end{abstract}

\section{Introduction}

  When  multivariate time series are used to describe the joint dynamics of stochastic processes, often an a priori assumption of a stabilizing mechanism is made. We expect, for example, that many economic variables will fluctuate randomly over time, but we do not find it plausible that they will diverge while doing so. This is because certain long-term equilibrium relationships are assumed to be present between such variables, despite the volatility we see in them over short time horizons. Many economic examples are now known (see for example \cite{EngleGranger1987}) and other fields in which such co-integration relationships are found include neuroscience (\cite{Ostergaard2017}) and gene differentiation in populations (\cite{Hoessjer2014}).

A possible model feature that implements such a stabilizing mechanism in multivariate time series is {\sl mean reversion}: the tendency of all components to drift towards a constant, which has the interpretation of the long term average of the series. Adding a mean reversion term to all increments of a multivariate discrete random walk makes the series second-order stationary. This also guarantees that the distance between any two components will be stationary.

 In this paper we propose a different mechanism: instead of making all components tend to 
a priori chosen constants, 
we impose that at every time step they move,
in expectation, 
towards the minimal value among all components.
When different components in the time series signify the value of a certain common variable among different groups, such an assumption can be interpreted as all groups mimicking the behaviour of the group which currently has the 'best' value\footnote{We consider reversion to the minimum in this paper and only consider biometric variables for which low values are deemed to be best. One could of course also apply our analysis in cases where there is a drift towards the maximum. Our choice here is motivated by the particular empirical example we give in the last section, where we consider multivariate time series for human mortality. }. 

 If knowledge about what is beneficial or detrimental for a group to minimize  a certain quantitative indicator is communicated  between different groups, 
and each group is capable of using this knowledge to their benefit, 
the best practices are implemented in all groups over time. The overall effect would be that the group which has achieved the best (i.e. the minimal) value would be 'followed' over time by those with worse achievements, as they 'learn' or 'mimic' what is beneficial. We will show that this effect will lead to a co-integration relationship between groups but also to an additional downward drift that no group would achieve on its own or if the minimum reversion effect would not be present. 
In that sense, learning from the best performer generates improvements for all that would not occur if individual groups were left to their own devices.

 This hypothesis of 'reversion to the minimum' in multivariate time series can be tested by specifying models with and without such an effect and 
applying the Bayesian Information Criterion 
to compare the goodness of fit. In this paper we provide such  specifications, analyse the properties of models where it is present, and give an empirical example of a specific time series where clear evidence for this effect is found. 

There are a number of 
models in the literature that are related to the one 
we propose in this paper. Systems of diffusion processes have been proposed in which the drift and diffusion coefficients of an individual process depend on the rank of that process within the system.
For example, \cite{Fernholz2002} and \cite{AtlasModel2005} introduce the Atlas model to describe the market capitalization of firms in equity markets, and \cite{Ichiba2013} apply their model to define optimal investment strategies. The Atlas model is applied by \cite{Sartoretti2013} to describe the dynamics of a swarm, and
\cite{Ruzmaikina2005} and \cite{Shkolnikov2009} consider the evolution of competing particle systems 
in discrete time and study the distribution of the gaps between the particles. 
\cite{Balazsa2014} introduced a continuous time Markov jump model for interacting particles with a jump rate which depends linearly on the distance of the particle from the center of mass of the whole group.

In contrast to the models proposed in the literature, we consider a system of discrete time processes under the assumption that 
the distribution of each component's stochastic increments 
does not 
depend on the rank of the component but on its distance to the minimum component.  This feature leads to multivariate time series with short term volatility and long term equilibrium relationships, while it gives a clear interpretation  of  the presence of the stabilizing mechanism in the long run.

The remainder of the paper is organised as follows. In section \ref{section:model} we introduce our model and analyse some of its theoretical properties.
Section \ref{section:empiricalstudy} provides a study in which we provide empirical evidence that a 'minimum reversion' effect can be found in human mortality data. Finally, we provide some conclusions and suggestions for further research in Section \ref{section:conclusion}.

\section{Minimum Reversion Model} \label{section:model}

\subsection{Specification } \label{subsection:minimalreversion}

Let  ${\cal T}=\{t_0,t_0+1,...,t_1\}$ for certain $t_0,t_1\in{\mathbb N}$ with $t_1>t_0$ and let $\{\kappa_t\}_{t\in{\cal T}}$ be a multivariate time series in ${\mathbb R}^C$ with components $\kappa_{t,c}$ with $c\in{\cal C}=\{1,2,...,C\}$ for a given \mbox{$C\in {\mathbb N}\setminus\{0,1\}$.}
We denote the minimum value among the different components at time $t \in {\cal T}$ by
\begin{equation} \label{eq_min}
m_t := \min_{c \in {\cal C}} \kappa_{t,c},
\end{equation}
and specify the following model for the dynamics of the components $\kappa_{t,c}$ with  $c \in {\cal C}$: 
\begin{eqnarray} \label{eq_seriesA}
 \kappa_{t+1,c}-\kappa_{t,c} &=& \mu_c + \zeta_c (\kappa_{t,c}-\kappa_{t-1,c}) 
+ \sigma_c Z_{t+1,c} 
+ \lambda_c( m_t-\kappa_{t,c}) .
\end{eqnarray}
 The $\mu_c\in{\mathbb R}$, $\zeta_c\in [0,1)$, 
$\lambda_c\in [0,1)$  
 and $\sigma_c\in{\mathbb R}^+$ are given constants. 
We define
\mbox{$Z_{t} = \left(Z_{t,1} , \ldots, Z_{t,C} \right)'$ }
and assume that the 
$\left\{Z_t\right\}_{t\in{\cal T}}$
are independent and identically distributed $C$-dimensional random variables with a multivariate Gaussian distribution.
More precisely, we assume that
\begin{eqnarray} 
 \label{eq_seriesB}
 Z_{t,c} &=& \rho_c W_{t,0} + \sqrt{1-\rho_c^2} W_{t,c},
\end{eqnarray}
for independent standard Gaussian variables $\{W_{t,c}\}_{c \in {\cal C}\cup\{0\}, t \in {\cal T}}$,
which creates a correlation structure parameterized by the $C$ constants 
$\rho_c\in\, (-1,1)$.  

 The constant drift $\mu_c$, the first order autoregression coefficient for the differenced series $\zeta_c$ and the Gaussian increment $\sigma_c Z_{t+1,c}$ are standard features in time series modelling. Our innovation concerns the $\lambda_c$ parameters which quantify the 'learning effect'. 
To discuss the properties of the newly proposed term in (\ref{eq_seriesA}) we first consider the model with $\mu_c=\zeta_c = 0$ and $\lambda_c>0$. 
This introduces a downward drift $\lambda_c( m_t-\kappa_{t,c})$ for $\kappa_c$ when $c$ is not the component with the lowest current value. 

The specification in (\ref{eq_seriesA}) with $\mu_c=\zeta_c = 0$ and $\lambda_c>0$ also creates a downward trend in the minimum process $m_t$ defined in (\ref{eq_min}). 
To form some intuition for this result we consider the case where $\rho_c=0$ for all $c$, that is, the $Z_{t,c}$ are independent for any fixed $t$. In that case we have
$${\mathbb P}(m_{t+1} \leq a \mid \kappa_t)=1- \prod_{c \in {\cal C}} 
\Phi \left(\frac{(1-\lambda_c) \kappa_{t,c} -a+\lambda_c m_t }{\sigma_c} \right).$$ Substituting $a=m_t$ gives an expression for ${\mathbb P}(m_{t+1} \leq m_t \mid \kappa_{t})$
and since $\kappa_{t,c}\geq m_t$ for all $c$ and $\kappa_{t,c} = m_t$ for at least one $c$, this probability must be strictly greater 
 than $\sfrac12$ and smaller than or equal to $1-2^{-C}$. This establishes that the probability of a downward movement is always 
greater 
than $\frac12$ and that it is 
largest 
 when all components attain the same value, that is, $\kappa_{t,c}=m_t$ for all $c\in\cal C$. The probability is increasing with the number of components $C$.
 
Individual components are thus not stationary but they turn out to be co-integrated, as the following result shows.

\begin{proposition} 
If all processes  in (\ref{eq_seriesA}) have a common minimum reversion parameter and a common drift and there is no autoregressive term,  so $\mu_c=\mu$, \mbox{$\lambda_c=\lambda>0$} and $\zeta_c = 0$ for all $c\in {\cal C}$, then
the processes $\{\kappa_{\cdot,c}\}_{c\in {\cal C}}$ are co-integrated.
\end{proposition}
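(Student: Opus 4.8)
The plan is to establish co-integration in the Engle--Granger sense by showing that each individual component is integrated of order one while all pairwise differences are stationary. The natural starting point is to rewrite the dynamics under the stated assumptions $\mu_c=\mu$, $\zeta_c=0$ and $\lambda_c=\lambda>0$ as
\begin{equation*}
\kappa_{t+1,c} = (1-\lambda)\,\kappa_{t,c} + \lambda\, m_t + \mu + \sigma_c Z_{t+1,c},
\end{equation*}
which isolates the only problematic (nonlinear) term, $\lambda m_t$.

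First I would form the pairwise differences $\kappa_{t,c}-\kappa_{t,c'}$. The crucial observation is that the common drift $\mu$ and the common minimum-reversion term $\lambda m_t$ are identical for both components and therefore cancel, leaving the purely linear recursion
\begin{equation*}
\kappa_{t+1,c}-\kappa_{t+1,c'} = (1-\lambda)\bigl(\kappa_{t,c}-\kappa_{t,c'}\bigr) + \bigl(\sigma_c Z_{t+1,c} - \sigma_{c'} Z_{t+1,c'}\bigr).
\end{equation*}
Since $\lambda\in(0,1)$ we have $|1-\lambda|<1$, so this is a stable AR(1) driven by i.i.d. Gaussian innovations; stacking the differences relative to a fixed reference component gives a VAR(1) with coefficient matrix $(1-\lambda)I$, all of whose eigenvalues lie strictly inside the unit disc. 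Hence the vector of differences admits a stationary solution and is (asymptotically, second-order) stationary, irrespective of the contemporaneous correlation among the innovations induced by the common factor $W_{t,0}$.

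Next I would verify that the individual components are genuinely integrated of order one, which is the step where the minimum term must be handled with care. Writing $\kappa_{t,c}=\kappa_{t,1}+(\kappa_{t,c}-\kappa_{t,1})$ reduces the claim to showing that $\kappa_{t,1}$ alone is integrated of order one. Its first difference is $\mu + \sigma_1 Z_{t+1,1} + \lambda(m_t-\kappa_{t,1})$, and here the key identity is $m_t-\kappa_{t,1}=\min_{c}\kappa_{t,c}-\kappa_{t,1}=\min_{c}(\kappa_{t,c}-\kappa_{t,1})$, a continuous function of the jointly stationary difference vector and hence itself stationary. Therefore the increment of $\kappa_{t,1}$ is stationary while its level is not, so $\kappa_{t,1}$ — and with it every $\kappa_{t,c}$ — carries a common stochastic trend and is integrated of order one.

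Putting the pieces together, the $C$ components are each integrated of order one and the $C-1$ differences with respect to a reference component are stationary, which is precisely the statement that $\{\kappa_{\cdot,c}\}_{c\in{\cal C}}$ are co-integrated (with co-integrating rank $C-1$). The main obstacle throughout is the nonlinearity introduced by $m_t=\min_c\kappa_{t,c}$, which obstructs any direct linear-systems analysis of the levels. The device that resolves this is the cancellation of $\lambda m_t$ in the differences, which linearises the relevant dynamics, together with the observation that $m_t-\kappa_{t,c}$ is itself a stationary function of the differences alone; I would expect the only remaining technical points to be the standard ones distinguishing arbitrary initial conditions from a stationary initialisation.
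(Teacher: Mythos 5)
Your proof is correct and follows essentially the same route as the paper's: differencing against a fixed reference component cancels both the common drift $\mu$ and the nonlinear term $\lambda m_t$, leaving stable AR(1) differences, and then writing $m_t-\kappa_{t,c^*}=\min_{c}(\kappa_{t,c}-\kappa_{t,c^*})$ shows the reference component's increments are a stationary function of those differences, hence each level series is integrated of order one. If anything, your explicit VAR(1) formulation with coefficient matrix $(1-\lambda)I$ makes the joint-stationarity needed for the ``minimum of stationary processes'' step slightly more explicit than the paper's version.
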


\begin{proof} 
 Fix a $c^* \in {\cal C}$ and define $\tilde{\kappa}_{t,c} := \kappa_{t,c}-\kappa_{t,c^*}$ for any $c \in {\cal C}$. We then find for any $c \neq c^*$ 
\begin{eqnarray*}
\tilde{\kappa}_{t,c} 
&=& (1-\lambda)(\kappa_{t-1,c}-\kappa_{t-1,c^*})
+ \tilde{Z}_{t} \\
&=& (1-\lambda)\tilde{\kappa}_{t-1,c}
+ \tilde{Z}_{t},\hspace*{3cm}
\tilde{Z}_{t} \ = \  \sigma_{c}Z_{t,c}-\sigma_{c^*}Z_{t,c^*}.
\end{eqnarray*}
Since $0 < \lambda \leq 1$ we obtain that $\tilde{\kappa}_{t,c}$ is a stationary AR(1) process for all $c \neq c^*$. 

Furthermore, we find that
$
m_t =  \min_{c\in {\cal C}} (\kappa_{t,c^*}+\tilde{\kappa}_{t,c})\  \ = \ \kappa_{t,c^*} +  \min_{c\in {\cal C}}\tilde{\kappa}_{t,c}
$
and therefore 
\begin{eqnarray*}
\Delta \kappa_{t+1,c^*} \ := \  \kappa_{t+1,c^*} - \kappa_{t,c^*} 
&=& \mu_{c^*} +  \lambda( m_t-\kappa_{t,c^*}) 
+ \sigma_{c^*} Z_{t+1,c^*}\\
 &=& \lambda \min_{c\in {\cal C}}\tilde{\kappa}_{t,c}
+ \mu_{c^*} +  \sigma_{c^*} Z_{t+1,c^*}.
\end{eqnarray*}
The first term in the last expression is a minimum over stationary processes and the other terms are stationary too, hence $\Delta \kappa_{t,c^*}$ is stationary.
\end{proof}

\subsection{A analysis of the two dimensional case}

To characterize the dynamics of minimum reversion, and to show that it induces a downward drift even for the case where the drift parameters $\mu_c$ are taken to be zero, we look at the simplest non-trivial case in two dimensions ($C=2)$. This means the dynamics of the time series become:
\begin{eqnarray} \label{eq_series2}
 \kappa_{t+1,1}-\kappa_{t,1} &=&
 \lambda( \min_{c\in\{1,2\}}\kappa_{t,c}-\kappa_{t,1}) 
+ \sigma_{1}Z_{t+1,1} +\mu,\\
 \kappa_{t+1,2}-\kappa_{t,2} &=&
 \lambda( \min_{c\in\{1,2\}}\kappa_{t,c}-\kappa_{t,2}) 
+ \sigma_{2}Z_{t+1,2} +\mu. \label{eq_series2a}
\end{eqnarray}
Note that we take  $\lambda_1=\lambda_2:=\lambda$
and $\mu_1=\mu_2:=\mu$
 for the analysis in this section and we assume that the correlation between the i.i.d. Gaussian variables $Z_{t+1,1}$ and $Z_{t+1,2}$  equals $\rho$ for all $t$.

We define\footnote{If $\kappa_{t,1}=\kappa_{t,2}$ we let $m_t=\kappa_{t,1}$ and $M_t=\kappa_{t,2}$, but this event has zero probability for all $t\in {\cal T}$.} 
\[ m_t=\min_{c\in\{1,2\}}\kappa_{t,c} \  \mbox{ and }   \ M_t=\max_{c\in\{1,2\}}\kappa_{t,c}.\]

\begin{proposition} \label{prop_TwoPop}
For the model \eqref{eq_series2}-\eqref{eq_series2a} with $\lambda_1=\lambda_2:=\lambda$ we have that 
\begin{equation}\label{eq:mdrift}
\lim_{t\to\infty}\ {\mathbb E}[M_{t+1}-M_t] \ = \ 
\lim_{t\to\infty}\ {\mathbb E}[m_{t+1}-m_t] \ = \ \mu\ -\ s\sqrt{\sfrac{\lambda}{2\pi(2-\lambda)}}
\end{equation}
and
\begin{equation}\label{eq:mMspread}
\lim_{t\to\infty}\ {\mathbb E}[M_t-m_t] \ = \  s\sqrt{\sfrac{2}{\lambda\pi(2-\lambda)}}
\end{equation}
with  $s=\sqrt{\sigma_1^2+\sigma_2^2-2\rho\sigma_1\sigma_2}$.
\end{proposition}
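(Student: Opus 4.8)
The plan is to avoid working with $M_t$ and $m_t$ directly and instead exploit the two linear identities $M_t+m_t=\kappa_{t,1}+\kappa_{t,2}$ and $M_t-m_t=|\delta_t|$, where $\delta_t:=\kappa_{t,1}-\kappa_{t,2}$. The sum has a transparent additive dynamics, while the difference reduces to a one-dimensional autoregression; recovering the individual limits for $M_t$ and $m_t$ at the end is then just a matter of recombining the sum and the difference.

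First I would subtract \eqref{eq_series2a} from \eqref{eq_series2}. The common term $\lambda m_t$ cancels and one is left with
\[
\delta_{t+1}=(1-\lambda)\delta_t+\sigma_1 Z_{t+1,1}-\sigma_2 Z_{t+1,2},
\]
which is exactly the stationary AR(1) recursion appearing in the proof of the first proposition with $c^*=2$. The innovation is centred Gaussian with variance $\sigma_1^2+\sigma_2^2-2\rho\sigma_1\sigma_2=s^2$, so since $1-\lambda\in[0,1)$ the law of $\delta_t$ converges to $N\!\left(0,\frac{s^2}{\lambda(2-\lambda)}\right)$. Applying the half-normal identity ${\mathbb E}|X|=\sqrt{2/\pi}\,\mathrm{SD}(X)$ to this limit then gives $\lim_t{\mathbb E}[M_t-m_t]=\lim_t{\mathbb E}|\delta_t|=s\sqrt{\frac{2}{\lambda\pi(2-\lambda)}}$, which is \eqref{eq:mMspread}.

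For the drift I would add \eqref{eq_series2} and \eqref{eq_series2a}. Writing $S_t:=\kappa_{t,1}+\kappa_{t,2}=M_t+m_t$ and noting $2m_t-\kappa_{t,1}-\kappa_{t,2}=m_t-M_t=-|\delta_t|$, the increment becomes
\[
S_{t+1}-S_t=2\mu-\lambda|\delta_t|+\sigma_1 Z_{t+1,1}+\sigma_2 Z_{t+1,2}.
\]
Taking expectations and letting $t\to\infty$, the Gaussian terms drop out and ${\mathbb E}|\delta_t|$ converges to the value just found, so $\lim_t{\mathbb E}[S_{t+1}-S_t]=2\mu-s\sqrt{\frac{2\lambda}{\pi(2-\lambda)}}$. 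At the same time ${\mathbb E}[M_{t+1}-m_{t+1}]-{\mathbb E}[M_t-m_t]={\mathbb E}|\delta_{t+1}|-{\mathbb E}|\delta_t|\to 0$, so the two individual drifts $\lim_t{\mathbb E}[M_{t+1}-M_t]$ and $\lim_t{\mathbb E}[m_{t+1}-m_t]$ must coincide. Each is then half the limiting drift of $S_t$, yielding the common value $\mu-s\sqrt{\frac{\lambda}{2\pi(2-\lambda)}}$ of \eqref{eq:mdrift}.

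The one step that needs genuine care is the passage ${\mathbb E}|\delta_t|\to{\mathbb E}|\delta_\infty|$. For a Gaussian AR(1) with contraction factor $1-\lambda<1$ this is standard: weak convergence of $\delta_t$ to its stationary normal law holds, and the uniformly bounded second moments of the iterates supply the uniform integrability that upgrades weak convergence to convergence of the first absolute moment. Everything else is elementary algebra together with the half-normal mean formula, so I expect this integrability check to be the only real obstacle.
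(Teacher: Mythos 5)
Your proof is correct, and it takes a genuinely different route from the paper's. The paper works with the pair $(m_t,M_t)$ directly: it computes conditional expectations of the increments (truncated-Gaussian terms organized through the function $f(x)=x\Phi(-x)-e^{-x^2/2}/\sqrt{2\pi}$), observes that the normalized spread satisfies $\tilde{D}_{t+1}=(1-\lambda)|\tilde{D}_t+\tilde{H}|$, invokes the Meyn--Tweedie geometric-ergodicity theorem with the Lyapunov function $V(x)=|x|+1$ to obtain convergence to a stationary law, identifies that law by solving a fixed-point integral equation (a folded normal with $\tilde{\sigma}=(1-\lambda)/\sqrt{\lambda(2-\lambda)}$), and finally integrates $f$ against it. You instead diagonalize the dynamics: because $\lambda_1=\lambda_2$, the $\lambda m_t$ terms cancel in the difference, so the signed difference $\delta_t$ obeys an exact \emph{linear} Gaussian AR(1) recursion whose marginal law is explicitly Gaussian with parameters converging to those of $N\bigl(0,s^2/(\lambda(2-\lambda))\bigr)$; the identity $M_t-m_t=|\delta_t|$ then yields \eqref{eq:mMspread} from the folded-normal mean, while the sum process gives the drift by pure bookkeeping (since $2m_t-\kappa_{t,1}-\kappa_{t,2}=-|\delta_t|$), and recombining sum and difference separates the limits for $M$ and $m$. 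What your route buys: no Markov-chain ergodic theory and no fixed-point equation, and the one delicate step (${\mathbb E}|\delta_t|\to{\mathbb E}|\delta_\infty|$) closes easily --- with a deterministic or square-integrable initial condition the marginals are exactly Gaussian with converging mean and variance, so convergence of ${\mathbb E}|\delta_t|$ follows from continuity of the folded-normal mean, or from the uniform integrability you cite. What the paper's heavier machinery buys: explicit conditional drift formulas for $m_t$ and $M_t$ as functions of the current spread (used in the surrounding discussion), a geometric rate of convergence, and a framework that survives when the signed difference is no longer linear --- e.g.\ with $\lambda_1\neq\lambda_2$ the difference follows a piecewise-linear threshold autoregression rather than a Gaussian AR(1), and for $C>2$ the pairwise differences do not close into a linear system, so your Gaussian shortcut is really special to the symmetric two-dimensional case treated in this proposition.
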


\begin{proof}
Let the indices of the minimum and maximum be denoted by $i_t=\arg\min_{c\in\{1,2\}}\kappa_{t,c}$ and
$I_t=\arg\max_{c\in\{1,2\}}\kappa_{t,c}$. We first determine the distribution of the minimum and maximum\footnote{Since the variables are Gaussian this distribution can be obtained directly as well, see \cite{Roberts1966}.}.
Substituting in \eqref{eq_series2}-\eqref{eq_series2a} gives, for
$$E_{t+1} \ = \ M_t + \lambda(m_t-M_t) +  \sigma_{I_t}Z_{t+1,I_t} - (m_t + \sigma_{i_t}Z_{t+1,{i_t}}),$$ that
\begin{eqnarray*}
m_{t+1} &=& m_t + \sigma_{i_t}Z_{t+1,i_t} + \mu \ + \qquad\qquad\qquad\ \ \ E_{t+1}{\bf 1}_{E_{t+1}<0 },\\
M_{t+1} &=& M_t + \sigma_{I_t}Z_{t+1,I_t} + \mu + \lambda(m_t-M_t)  \ - \  E_{t+1}{\bf 1}_{E_{t+1}\geq 0 }.
\end{eqnarray*}
We find
\begin{eqnarray*}
{\mathbb E}[m_{t+1}|(m_t,M_t)] 
&=& m_t + \mu +  ( 1-\lambda)(M_t-m_t) {\mathbb P}(E_{t+1}<0) \\
&& + {\mathbb E}[ (\sigma_{I_t}Z_{t+1,I_t} -\sigma_{i_t}Z_{t+1,i_t} ){\bf 1}_{E_{t+1}<0}\ | \ (m_t,M_t)]\\
{\mathbb E}[M_{t+1}|(m_t,M_t)] 
&=& M_t + \mu  + \lambda(m_t-M_t)  - ( 1-\lambda)(M_t-m_t){\mathbb P}(E_{t+1}>0)\\&& - {\mathbb E}[ (\sigma_{I_t}Z_{t+1,I_t} -\sigma_{i_t}Z_{t+1,i_t} ){\bf 1}_{E_{t+1}>0}\ | \ (m_t,M_t)].
\end{eqnarray*}
The term $H:=\sigma_{I_t}Z_{t+1,I_t} -\sigma_{i_t}Z_{t+1,i_t}$ is independent of $(m_t,M_t)$ and has a Gaussian distribution with mean zero and variance $s^2\ :=\ {\sigma_{i_t}^2+\sigma_{I_t}^2}-2\rho\sigma_{i_t}\sigma_{I_t}\ = \ {\sigma_1^2+\sigma_2^2}-2\rho\sigma_1\sigma_2$ so ${\mathbb E}[H{\bf 1}_{H<a }]=-se^{-(a/s)^2/2}/\sqrt{2\pi}$ for all $a\in\mathbb R$. We thus find that ${\mathbb E}[m_{t+1}|(m_t,M_t)] $ equals\footnote{Here and in the sequel we use $\Phi$ to denote the cumulative probability function for a standard Gaussian random variable.}
\begin{eqnarray}\nonumber
\ &\ & m_t + \mu +( 1-\lambda)(M_t-m_t)\Phi(-\textstyle\frac{ (1-\lambda)(M_t-m_t)}{s})-s e^{-\textstyle\frac12 ( (1-\lambda)(M_t-m_t)/s )^2}/ \sqrt{2\pi}\\ 
&=& m_t +  \mu +  s (\ \tilde{D}_t \Phi(-\tilde{D}_t) - e^{-\frac12 \tilde{D}_t^2}/ \sqrt{2\pi} \ )
\ = \  m_t +   \mu + s\, f(\tilde{D}_t),
\label{eq:incrementminimum}
\end{eqnarray}
if we define the shorthand notation $D_t=M_t-m_t$ and $\tilde{D}_t=(1-\lambda)D_t/s$. The function $x\to f(x)=x\Phi(-x)-\exp(-x^2/2)/\sqrt{2\pi}$ increases monotonically from its minimum \mbox{$f(0)=-1/\sqrt{2\pi}=-0.3989$} for $x=0$ towards zero for $x=\infty$. Likewise,
\begin{eqnarray*}
{\mathbb E}[M_{t+1}|(m_t,M_t)] &=& M_t +   \mu  + \lambda(m_t-M_t) -   s\, f(\tilde{D}_t).
\end{eqnarray*}
Since 
$
D_{t+1}=M_{t+1}-m_{t+1}=|(1-\lambda)D_t+s\tilde{H}|$  for a  $\tilde H$ with the standard normal distribution,
we must have that  $\tilde{D}_{t+1}=|(1-\lambda)^2D_t/s+(1-\lambda)s{\tilde H}/s|=
(1-\lambda)\ |\tilde{D}_t+{\tilde H}|$.

Let $V(x)=|x|+1$, and define $\Delta V(x)={\mathbb E}[V(\tilde{D}_{t+1})-V(\tilde{D}_t)\mid \tilde{D}_t=x]$, then we have for  $\beta=\lambda$ and $b=(1-\lambda){\mathbb E}\, |H|$ that $ \Delta V(x)\leq -\beta V(x)+b$ which shows that the process $\tilde{D}$ satisfies the geometric ergodicity conditions in Theorem 15.0.1 of \cite{MeynTweedieBook} on the domain ${\mathbb R}^+$. This proves that for any choice of $\tilde{D}_0\geq 0$ the distribution function $G_t$ of $\tilde{D}_t$ converges to a stationary distribution $G$. This  implies  that for $x\geq 0$
the function $G$ should satisfy
$$G(x)={\mathbb P}(\tilde{D}_t\leq x)={\mathbb P}((1-\lambda)\ |\tilde{D}_t+\tilde{H}|\leq x) \ = \int_0^\infty [N(\sfrac{x}{1-\lambda}-y)-N(-\sfrac{x}{1-\lambda}-y)]dG(y).$$
 The solution to this equation is $G(x)={\bf 1}_{ x\geq 0}\int_0^x 2\exp({-\sfrac12(u/\tilde{\sigma})^2})/( \tilde{\sigma}\sqrt{2\pi})du$ with the parameter {$\tilde{\sigma}=(1-\lambda)/\sqrt{\lambda(2-\lambda)}$},  i.e. the stationary distribution of $\tilde D$ is the distribution of $|X|$ if $X\sim N(0,\tilde{\sigma}^2)$. The  stationary distribution of $M_t-m_t=D_t=s\tilde{D}_t/(1-\lambda)$ is therefore the  distribution of $s|X|/(1-\lambda)$, which gives \eqref{eq:mMspread}.

 Due to \eqref{eq:incrementminimum}, the expected change in $m_t$ under the stationary distribution for $\tilde{D}_t$  equals $\mu$ plus the effect of the reversion to the minimum. This effect was shown above to be
$ s{\mathbb E}[f(\tilde{D}_t)]=s\int_0^\infty f(x)dG(x)=s(\tilde{\sigma} - \sqrt{1+\tilde{\sigma}^2})/\sqrt{2\pi}$
which gives the second equality in \eqref{eq:mdrift} after some rewriting. The first one follows from the fact that $\tilde{D}$ converges to a stationary distribution. 
\end{proof}

\medskip

If we choose certain parameter values, the asymptotic drifts in proposition \ref{prop_TwoPop} can 
also be 
estimated 
by Monte Carlo simulations. We generated $10^5$ paths to approximate the stationary distributions and used these to estimate 
the limit of  the expectation of the extra drift  (Table \ref{table1}) and the limit of the expectation of the difference
between the minimum and maximum (Table \ref{table2}) for $s=1$, $\mu=0$ and in the absence of correlation between the increments for the two time series. 
Simulations also allow us to estimate the extra drift generated by the minimum reversion term in more than two dimensions, i.e. for $|{\cal C}|>2$ and we show some examples Tables \ref{table1} and \ref{table2}. 
As expected, we find that both the drift generated by the minimum reversion and the difference between the minimum and maximum increases with the population size $C$. The former increases and the latter decreases when the strength of the reversion to the minimum, which is determined by the parameter $\lambda$, increases.

\begin{table}[t]
\tbl{Generated drift  for different minimum reversion parameters $\lambda$ according to \eqref{eq:mdrift}  (second column) and using  $10^5$ simulations (third to last column).}
{
\begin{tabular}{crrrrrrr}
\ \\
                               & Exact & Simulation &&&&& \\ 
    $\lambda$            & $|{\cal C}|=2$ & $|{\cal C}|=2$ &  $|{\cal C}|=3$
    &  $|{\cal C}|=4$  &  $|{\cal C}|=8$  &  $|{\cal C}|=16$  &  $|{\cal C}|=32$ \\ 
   0.0125 & -0.0447 & -0.0448 & -0.0671 & -0.0817 & -0.1129 & -0.1401 & -0.1641 \\
    0.025 & -0.0635 & -0.0635 & -0.0952 & -0.1158 & -0.1602 & -0.1987 & -0.2329 \\
    0.05  & -0.0903 & -0.0903 & -0.1355 & -0.1649 & -0.2280 & -0.2828 & -0.3314 \\
    0.1   & -0.1294 & -0.1294 & -0.1942 & -0.2362 & -0.3266 & -0.4052 & -0.4748 \\
    0.2   & -0.1881 & -0.1881 & -0.2821 & -0.3432 & -0.4745 & -0.5887 & -0.6899 \\
    0.4   & -0.2821 & -0.2821 & -0.4231 & -0.5147 & -0.7118 & -0.8830 & -1.0348 \\
\end{tabular}
}
\label{table1}
\end{table}

\begin{table}[t]
\tbl{Expectation of the stationary distribution for $M_t-m_t$  for different minimum reversion parameters $\lambda$ according to \eqref{eq:mMspread} (second column) and using  $10^5$ simulations (third to last column).}
{
\begin{tabular}{crrrrrrr} \ \\
                               & Exact & Simulation &&&&& \\
    $\lambda$            & $|{\cal C}|=2$ & $|{\cal C}|=2$ &  $|{\cal C}|=3$
    &  $|{\cal C}|=4$  &  $|{\cal C}|=8$  &  $|{\cal C}|=16$  &  $|{\cal C}|=32$ \\ 
   0.0125 & 7.1589 & 7.1593 & 10.7386 & 13.0627 & 18.0644 & 22.4106 & 26.2618 \\
    0.025 & 5.0781 & 5.0774 & 7.6185 & 9.2656 & 12.8138 & 15.8962 & 18.6282 \\
    0.05  & 3.6137 & 3.6132 & 5.4202 & 6.5931 & 9.1178 & 11.3112 & 13.2565 \\
    0.1   & 2.5887 & 2.5886 & 3.8831 & 4.7229 & 6.5316 & 8.1031 & 9.4965 \\
    0.2   & 1.8806 & 1.8806 & 2.8209 & 3.4313 & 4.7454 & 5.8866 & 6.8988 \\
    0.4   & 1.4105 & 1.4104 & 2.1156 & 2.5735 & 3.5591 & 4.415 & 5.1742 \\
\end{tabular}
}
\label{table2}
\end{table}

\section{Evidence for a Learning Effect in Mortality Rates}\label{section:empiricalstudy}

In this section we use the time series model in (\ref{eq_seriesA}) to model
mortality rates  
 in several countries.
 The implicit assumption is that a population with the high mortality rate "copies" the behaviour of individuals in the population with low mortality. Such a model is consistent with the spread of medical advances and changes in behaviour like a reduction in smoking prevalence.

\subsection{Modelling Mortality - The Common Age Effect Model}

 To obtain the time series, we first fit a stochastic mortality model to observed death counts.
We assume that the number of deaths, $D_{xtc}$, in population $c\in{\cal C}$ at age $x\in{\cal X}$ in calendar year $t\in{\cal T}$ is a random variable with a Poisson distribution, that is, 
\begin{equation} \label{eq_Poisson}
 D_{xtc} \sim \mbox{Pois}\left(\mu_{xtc} E_{xtc} \right) 
\end{equation}
where $\mu_{xtc}$ is the hazard rate (also known as the "force of mortality" in the literature) and $E_{xtc}$ refers to the central 
exposure 
 to risk.

 We then use a stochastic model for the force of mortality $\mu_{xtc}$ that incorporates the population-specific time series $\{\kappa_t\}_{t\in{\cal T}}$ defined in
\eqref{eq_seriesA}. 
As we are modelling the mortality in multiple populations simultaneously and wish to make our model suitable for a wide age range, we use a modification of the Lee-Carter model (\cite{LeeCarter:92}) with common age effects, 
as suggested by \citet{Kleinow:2015}:
\begin{equation} \label{eq_CAEmodel}
\log \mu_{xtc} = \alpha_{x}  + \beta_{x}  \kappa_{t,c} 
\end{equation} 
where the age effects $\alpha_{x}$ and $\beta_{x}$ do not depend on the population $c$. 
Having age effects that are common to all populations ensures that the individual components $\kappa_{t,c}$ (known as "period effects") are comparable across populations as they are rescaled by a common vector $\beta$ and shifted by a common vector $\alpha$. 
The parameters in (\ref{eq_CAEmodel}) are not identifiable since 
\begin{equation*}
\alpha_{x}  + \beta_{x}  \kappa_{t,c} =
\tilde\alpha_{x}  + \tilde\beta_{x}  \tilde\kappa_{t,c}\quad\mbox{ when }\quad
\tilde\alpha_{x} = \alpha_{x} - \sfrac{K_1}{K_2}\beta_x,\quad
\tilde\beta_{x} = \sfrac{\beta_{x}}{K_2},\quad
\tilde\kappa_{t,c} = K_1+K_2\, \kappa_{t,c} .
\end{equation*}
for any real numbers $K_1$ and $K_2 \neq 0$. 
To identify a unique set of parameters, 
 we impose the following constraints on the parameter vectors $\alpha$ and $\beta$:
\begin{equation*} 
\alpha_{x_r} = 0 \mbox{ and } \beta_{x_r} = 1
\end{equation*}
for a fixed reference age $x_r \in {\cal X}$. 
Applying those two constraints means that $\log \mu_{xtc} = \kappa_{t,c}$ for $x=x_r$ in every population $c \in {\cal C}$. In other words, we can interpret the period effect $\kappa_{t,c}$ as the fitted log mortality rate at the reference age $x_r$ in population $c$. 
For our empirical analysis we choose the reference age $x_r = 70$.

We estimate the age effects $\alpha$ and $\beta$ and the period effects $\kappa$ using the maximum likelihood method based on the Poisson model in (\ref{eq_Poisson}) for observed deaths counts $D_{xtc}$ and exposure values $E_{xtc}$.

\begin{figure}[t]
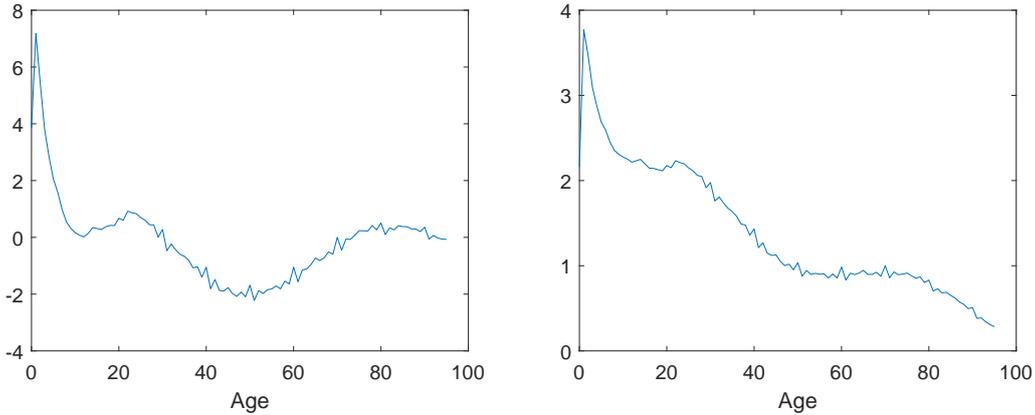

\begin{center}
\infig{0.45}{fig-alpha-KV-female-0-95-CAE-1921-2011} 
\infig{0.45}{fig-beta-KV-female-0-95-CAE-1921-2011} 
\end{center}
\caption{\label{fig_ageEffects} Estimated values of $\alpha_x$ (left) and $\beta_x$ (right) for the mortality rates of females based on data for the period 1921--2011. We can clearly see the phenomenon known as "age heaping": when people report the death of a person without knowing their exact age, they often report the age to be a multiple of $10$.}
\end{figure}

\subsection{Data} \label{Sec_Data}

 The empirical death counts and exposure data have been obtained from the Human Mortality Database (HMD), see \cite{HMD}. We consider data for ages ${\cal X}=\{0,1,\ldots,95\}$ and for two ranges of calendar years: ${\cal T}=\{1951,\ldots,2011\}$ 
or 
${\cal T}=\{1921,\ldots,2011\}$. Table \ref{table_Countries} shows a list of countries included in our study together with the HMD country codes. As indicated in the table, there are some countries for which data from 1921 are not available and those countries have been excluded from our empirical study for that range or years. 
\begin{table}[htb]
\small
\begin{center}
\caption{\label{table_Countries} List of countries included in our study. For all countries mortality data for males and females are considered for calendar years 1951--2011. The table also indicates which countries are included in our study for the longer observation period 1921--2011. Note that for France and England \& Wales the total populations are considered rather than the civilian population.}
\begin{tabular}{lcclcc}\ \\
Country & HMD Code & 1921--2011 & Country & HMD Code & 1921--2011  \\ \ \\
The Netherlands & NLD & \checkmark & 
Sweden & SWE & \checkmark \\
Denmark & DNK & \checkmark &
Belgium & BEL & \checkmark \\
Finland & FIN & \checkmark &
England \& Wales & GBRTENW& \checkmark \\
France & FRATNP & \checkmark  &
Switzerland & CHE & \checkmark  \\
Australia & AUS & \checkmark  &
Italy & ITA & \checkmark  \\
Austria & AUT & &
Ireland & IRL & \\
Norway & NOR & \checkmark  &
Japan & JAP & \\
Canada &CAN & \checkmark  &
New Zealand & NZL\_NP & \\
Portugal & PRT & &
Spain &ESP & \checkmark  \\
USA & USA & &
Iceland & ISL & \checkmark 
\end{tabular}
\end{center}
\end{table}

\subsection{Empirical Results}

Figure \ref{fig_ageEffects} shows the estimated values of the age effects $\alpha_x$ and $\beta_x$ obtained from data for females based on the observation period 1921--2011. 
The age effects for the other data sets are not shown since they all have a very similar shape.
To illustrate our data 
and the effect of the minimum reversion effect,
Figure \ref{fig_periodEffects} shows the estimated values of $\kappa_{t,c}$ for females for the years 1921--2011 in the dataset, together with two projected scenarios based on the model in (\ref{eq_seriesA}). On the lefthand side the projection includes the minimum reversion effect based on the estimated parameter $\lambda$ and on the righthand side we set $\lambda = 0$. 

To investigate the significance of the learning effect, we compare the Bayesian Information Criterion (BIC) for models without minimum reversion ($\lambda = 0$) with the more general model in (\ref{eq_seriesA}). If $K$ denotes the number of parameters and $n = |{\cal C}||{\cal T}|$ is the total number of observations across all populations and all years, the BIC value equals  $K\log n - 2 \log L$, where $L$ is the maximum value of the likelihood function $L$ for the model in (\ref{eq_seriesA}). This means 
\[ 2 \log L = - n \log| \Sigma | - \sum_{t=t_0+2}^{t_1} \left(\Delta \kappa_t - \eta_{t-1} \right)' \Sigma^{-1} \left(\Delta \kappa_t - \eta_{t-1} \right) + \Gamma, \] 
where $\Gamma$ is a constant which does not depend on the parameters that must be estimated, $\Delta \kappa_{t}$ and $\eta_{t}$ are vectors with components
\[
\Delta \kappa_{t,c} = \kappa_{t,c}-\kappa_{t-1,c} \mbox{ and }
 \eta_{t,c} = \mu_c + \zeta_c \Delta \kappa_{t,c} + \lambda_c( m_t-\kappa_{t,c}) ,
\]
and $\Sigma$ is the covariance matrix of $\Delta \kappa_t$ so $\Sigma_{ij} = \sigma_i \sigma_j \rho_i \rho_j$ for $i\neq j$ and $\Sigma_{ii} = \sigma_i^2$.

We determined the BIC values for different specifications in which parameter values may or may not be constrained to be the same for all different groups in ${\cal C}$.  
Based on this analysis, the parameters $\sigma_c$ and $\rho_c$ are taken population-specific while the parameters $\mu_c=\mu$, $\lambda_c=\lambda$ and $\zeta_c=\zeta$ are common to all populations. The obtained BIC values and parameter estimates\footnote{We do not show the values $\rho_c$
 and $\sigma_c$ for every group $c$ in $\cal C$ but only report $\bar\rho$, the average of $\rho_c$ over all groups $c$, and ${\bar \sigma}$, which is defined by the requirement that ${\bar\sigma}^2$ is the average of $\sigma^2_c$ over all groups $c$.   
Estimated parameter values for all countries, genders and data periods are available upon request.} are shown in Table \ref{tableBIC} for four data sets.

\begin{table}\begin{center}
\small
\caption{Goodness of fit and parameter values of the model in (\ref{eq_seriesA}) without and with minimum reversion, estimated for different data sets.}
\label{tableBIC}
\begin{tabular}{lcccccccc} \ \\
Model & $\log L$ & $K$ & BIC & ${\hat \mu}$   & ${\hat \lambda}$ & ${\hat \zeta}$ & ${\bar \sigma}$ &  ${\bar \rho}$ \\ 
\multicolumn{8}{l}{Females in calendar years 1921--2011}\\
$\lambda = 0$    & 3886.16 & 30 & -7558.49 & -0.0224 &        & -0.2998 & 0.0338 & 0.4996\\ 
$\lambda \neq 0$ & 3890.61 & 31 & -7560.27 & -0.0194 & 0.0191 & -0.2895 & 0.0336 & 0.4977\\ 
\multicolumn{8}{l}{Males in calendar years 1921--2011}\\
$\lambda = 0$    & 4151.49 & 30 & -8089.15 & -0.0150 &        & -0.1890 & 0.0297 & 0.5207\\ 
$\lambda \neq 0$ & 4155.65 & 31 & -8090.35 & -0.0131 & 0.0185 & -0.1804 & 0.0295 & 0.5224\\ 
\multicolumn{8}{l}{Females in calendar years 1951--2011}\\
$\lambda = 0$    & 3854.23 & 42 & -7411.38 & -0.0210 &        & -0.3414 & 0.0288 & 0.4717\\ 
$\lambda \neq 0$ & 3861.62 & 43 & -7419.08 & -0.0161 & 0.0177 & -0.3394 & 0.0285 & 0.4641\\ 
\multicolumn{8}{l}{Males in calendar years 1951--2011}\\
$\lambda = 0$    & 4100.51 & 42 & -7903.94 & -0.0246 &        & -0.3320 & 0.0252 & 0.5845\\ 
$\lambda \neq 0$ & 4105.98 & 43 & -7907.81 & -0.0192 & 0.0167 & -0.3219 & 0.0250 & 0.5832\\ 
\end{tabular}%
\end{center}\end{table}

We notice that the BIC values always improve when $\lambda$ is not restricted to be zero which shows that the learning effect adds to the goodness of fit of the model. We also observe that the estimated drift parameter $\hat \mu$ is reduced when minimum reversion is included in the model, from which we conclude that some of the mortality improvements in the populations are driven by learning effects from others who have lower mortality rates.

\section{Conclusions and Further Research} \label{section:conclusion}

\begin{figure}[b]
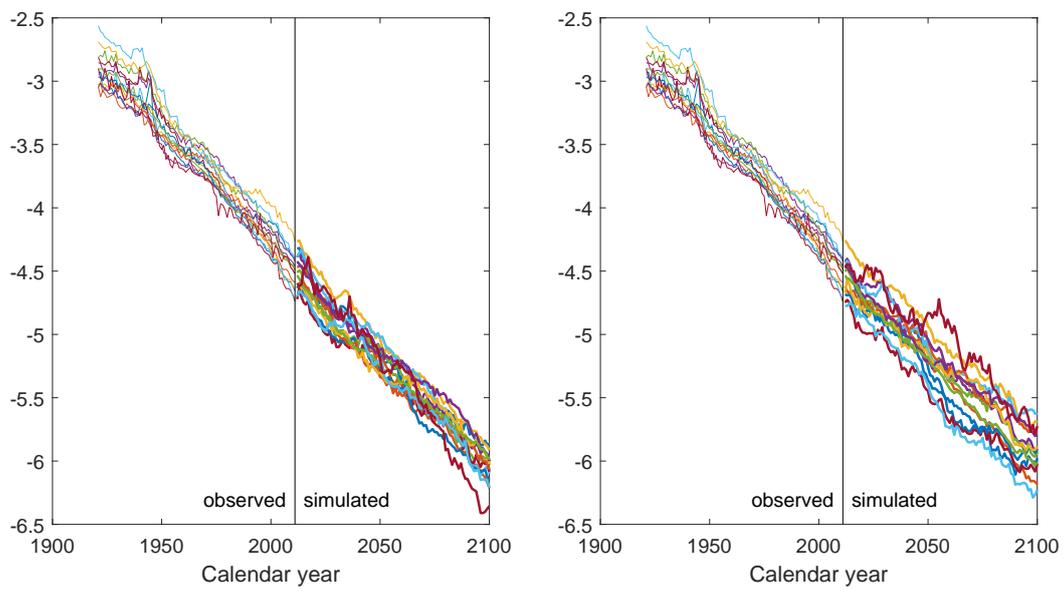

\begin{center}
\infig{0.45}{fig-KV-female-0-95-CAE-1921-2011-MRW-KV-AR} 
\infig{0.45}{fig-KV-female-0-95-CAE-1921-2011-MRW-AR} 
\end{center}
\caption{\label{fig_periodEffects} Estimated values of $\kappa_{t,c}$ for the mortality rates of females based on data for the period 1921--2011. The simulated scenarios are based on the full model in (\ref{eq_seriesA}) (left) and the nested model with $\lambda = 0$ (right).}
\end{figure}

We have shown that model specifications that include a minimum reversion term resulted in better BIC values than specifications without such a term when they were fitted to several datasets from the human mortality database. Visual inspection of the projections generated by the time series with minimum reversion in \ref{fig_periodEffects} also shows a clear improvement. This testifies to the usefulness of the incorporation of a "learning effect" in the time series.

 Several extensions of the proposed model are possible. In particular, the minimum in (\ref{eq_seriesA}) could be replaced by other rank statistics. Another possible direction of research  
is the study of 
a continuous time version of minimum reversion, where the drift term of a diffusion process in multiple dimensions is a function of the distance of the process to the minimum of its components. 
The properties of these processes 
could then be compared 
to the rank-based diffusion processes for particle systems
mentioned in the Introduction of this paper.

\bibliographystyle{biometrika}
\bibliography{MultiPopModels}

\begin{thebibliography}{15}
\expandafter\ifx\csname natexlab\endcsname\relax\def\natexlab#1{#1}\fi

\bibitem[{Bal\'azsa et~al.(2014)Bal\'azsa, R\'aczb \& T\'oth}]{Balazsa2014}
\textsc{Bal\'azsa, M.}, \textsc{R\'aczb, M.~Z.} \& \textsc{T\'oth, B.} (2014).
\newblock {Modeling flocks and prices: Jumping particles with an attractive
  interaction}.
\newblock \textit{{Annales de l'Institut Henri Poincar\'e - Probabilit\'es et
  Statistiques}} \textbf{50}, 425--454.

\bibitem[{Banner et~al.(2005)Banner, Fernholz \& Karatzas}]{AtlasModel2005}
\textsc{Banner, A.~D.}, \textsc{Fernholz, R.} \& \textsc{Karatzas, I.} (2005).
\newblock {Atlas Model of Equity Markets}.
\newblock \textit{{The Annals of Applied Probability}} \textbf{15}, 2296--2330.

\bibitem[{Engle \& Granger(1987)}]{EngleGranger1987}
\textsc{Engle, R.~F.} \& \textsc{Granger, C. W.~J.} (1987).
\newblock Co-integration and error correction: Representation, estimation, and
  testing.
\newblock \textit{Econometrica} \textbf{55}, 251--276.

\bibitem[{Fernholz(2002)}]{Fernholz2002}
\textsc{Fernholz, E.~R.} (2002).
\newblock \textit{{Stochastic Portfolio Theory}}.
\newblock {Springer}.

\bibitem[{{HMD}(2018)}]{HMD}
\textsc{{HMD}} (2018).
\newblock {Human Mortality Database: University of California, Berkeley (USA),
  and Max Planck Institute for Demographic Research (Germany)}.
\newblock Available at www.mortality.org or www.humanmortality.de (data
  downloaded on 30 April 2018).

\bibitem[{H{\"o}ssjer \& Ryman(2014)}]{Hoessjer2014}
\textsc{H{\"o}ssjer, O.} \& \textsc{Ryman, N.} (2014).
\newblock Quasi equilibrium, variance effective size and fixation index for
  populations with substructure.
\newblock \textit{Journal of Mathematical Biology} \textbf{69}, 1057--1128.

\bibitem[{Ichiba et~al.(2013)Ichiba, Pal \& Shkolnikov}]{Ichiba2013}
\textsc{Ichiba, T.}, \textsc{Pal, S.} \& \textsc{Shkolnikov, M.} (2013).
\newblock {Convergence rates for rank-based models with applications to
  portfolio theory}.
\newblock \textit{{Probab. Theory Relat. Fields}} \textbf{156}, 415--448.

\bibitem[{Kleinow(2015)}]{Kleinow:2015}
\textsc{Kleinow, T.} (2015).
\newblock A common age effect model for the mortality of multiple populations.
\newblock \textit{Insurance: Mathematics and Economics} \textbf{63}, 147 --
  152.
\newblock Special Issue: Longevity Nine - the Ninth International Longevity
  Risk and Capital Markets Solutions Conference.

\bibitem[{Lee \& Carter(1992)}]{LeeCarter:92}
\textsc{Lee, R.~D.} \& \textsc{Carter, L.~R.} (1992).
\newblock {Modeling and Forecasting U.S. Mortality}.
\newblock \textit{Journal of the American Statistical Association} \textbf{87},
  659--675.

\bibitem[{Meyn et~al.(2009)Meyn, Tweedie \& Glynn}]{MeynTweedieBook}
\textsc{Meyn, S.}, \textsc{Tweedie, R.~L.} \& \textsc{Glynn, P.~W.} (2009).
\newblock \textit{Markov Chains and Stochastic Stability}.
\newblock Cambridge Mathematical Library. Cambridge University Press, 2nd ed.

\bibitem[{{\O}stergaard et~al.(2017){\O}stergaard, Rahbek \&
  Ditlevsen}]{Ostergaard2017}
\textsc{{\O}stergaard, J.}, \textsc{Rahbek, A.} \& \textsc{Ditlevsen, S.}
  (2017).
\newblock Oscillating systems with cointegrated phase processes.
\newblock \textit{Journal of Mathematical Biology} \textbf{75}, 845--883.

\bibitem[{Roberts(1966)}]{Roberts1966}
\textsc{Roberts, C.} (1966).
\newblock A correlation model useful in the study of twins.
\newblock \textit{Journal of the American Statistical Association} \textbf{61},
  1184--1190.

\bibitem[{Ruzmaikina \& Aizenman(2005)}]{Ruzmaikina2005}
\textsc{Ruzmaikina, A.} \& \textsc{Aizenman, M.} (2005).
\newblock {Characterization of invariant measures at the leading edge for
  competing particle systems}.
\newblock \textit{{The Annals of Probability}} \textbf{33}, 82--113.

\bibitem[{Sartoretti \& Hongler(2013)}]{Sartoretti2013}
\textsc{Sartoretti, G.~A.} \& \textsc{Hongler, M.-O.} (2013).
\newblock {Soft control of swarms: analytical approach}.
\newblock \textit{{Proceedings of the 5th International Conference on Agents
  and Artificial Intelligence}} \textbf{1}.

\bibitem[{Shkolnikov(2009)}]{Shkolnikov2009}
\textsc{Shkolnikov, M.} (2009).
\newblock {Competing Particle Systems Evolving by I.I.D. Increments}.
\newblock \textit{{Electronic Journal of Probability}} \textbf{14}, 728--751.

\end{thebibliography}

\end{document}